\documentclass{article}
\usepackage{graphicx} 
\usepackage[authoryear]{natbib}
\usepackage{amsthm}   
\usepackage{amsmath}
\usepackage{amsmath}   
\usepackage{amsfonts}  

\numberwithin{equation}{section}
\numberwithin{figure}{section}

\newtheorem{theorem}{Theorem}[section]
\newtheorem{definition}[theorem]{Definition}
\newtheorem{example}[theorem]{Example}

\newcommand{\Corr}{\text{Corr}}
\newcommand{\Cov}{\text{Cov}}

\title{An Introduction to Copulas: a Complement}
\author{Werner G. Müller}
\date{May 2026}

\begin{document}

\maketitle

\begin{abstract}
For many years I have taught an advanced statistical inference course for master’s students using the text of \citet{CasellaBerger2002}. The book gives a comprehensive treatment of the core topics at a level that avoids measure theory while remaining mathematically precise, but it does not cover the increasingly important concept of copulas. The present notes are intended to complement the book by adding two sections on copulas in a style that is as close as possible to that of the original text. Numbering of definitions, theorems, examples, and exercises is consistent with \citet{CasellaBerger2002}, but the material may also be read as a brief, stand-alone introduction to copula theory.
\end{abstract}

 \setcounter{section}{9}
 \renewcommand{\thesection}{4.\arabic{section}}

\section{Copulas}
\label{sec:4.8}

In many applications we wish to model the joint behavior of several random
variables while allowing flexible and possibly different models for their
marginal distributions. Classical multivariate families (for example, the
bivariate normal in Section~3.3) tie the joint and marginal distributions
together quite rigidly. Copulas provide a way to separate these two aspects: the margins and
the dependence structure \citep{Nelsen1999,Joe1997}.

In the years leading up to the 2008 financial crisis, a particular dependence
model---the Gaussian copula---became widely used on Wall Street to price and
manage the risk of complex credit products such as collateralized debt
obligations \citep{Salmon2009}. By summarizing default dependence between
thousands of loans in a single, tractable parameter, it allowed rating agencies
and banks to turn heterogeneous pools of mortgages into highly rated
securities. The simplicity of the formula helped mask the
fact that it relied on a very specific and fragile assumption about joint tail
behavior: that extreme losses across different loans were, in effect, nearly
independent once the linear correlation was set. When housing
prices fell nationwide and defaults became simultaneously likely across many
regions, this assumption broke down, and the models severely understated the
probability of joint defaults. Copulas themselves are not at
fault, but this episode illustrates how important it is to understand what a
dependence model captures---and, just as importantly, what it leaves out.

In the following we consider solely the bivariate case; higher dimensions are similar in spirit but
technically more involved \citep{Nelsen1999,Joe1997}.

\subsection{Definition and Basic Properties}
\label{subsec:4.8.1}

Recall from Section~4.5 that a bivariate distribution function $F_{X,Y}$ must be
2-increasing and right-continuous and must have appropriate limits at
infinity, see Definition~4.1.1 and the discussion preceding
Example~4.5.1. Copulas are special bivariate distribution functions on
$(0,1)^2$ with uniform marginals \citep{Nelsen1999}.

\begin{definition}\label{def:4.8.1}
Let $I=(0,1)$. A \textbf{bivariate copula} (or \textbf{2-copula}) is a function
$C:I^2\to I$ such that:
\begin{enumerate}
\item[(i)] (Margins) For every $u,v\in I$,
\[
C(u,0)=0, C(0,v)=0\quad  (\text{i.e. } C \text{ is grounded}),\quad C(u,1)=u, C(1,v)=v.
\]
\item[(ii)] (2-increasing) For every $u_1\le u_2$ and $v_1\le v_2$ in $I$,
\[
C(u_2,v_2)-C(u_2,v_1)-C(u_1,v_2)+C(u_1,v_1)\ge 0.
\]
\end{enumerate}
\end{definition}

Thus a copula is itself a bivariate cdf on $(0,1)^2$ with uniform $(0,1)$
marginals, in the sense of Definition~4.1.1 \citep{Nelsen1999}. Three special copulas will be useful: 
\begin{definition}\label{def:4.8.2}
Let $I=(0,1)$. The following functions from $I^2$ to $I$ are called the
\emph{product}, \emph{lower Fr\'echet}, and \emph{upper Fr\'echet} copulas,
respectively:
\begin{align*}
\Pi(u,v) &= uv, \\
W(u,v)  &= \max\{u+v-1,0\}, \\
M(u,v)  &= \min\{u,v\},
\end{align*}
for $u,v\in I$.
\end{definition}

These correspond, respectively, to independence, perfect negative monotone
dependence, and perfect positive monotone dependence \citep{Nelsen1999}. As in the
inequalities for general bivariate distributions (Section~4.5), one can show
that for any copula $C$,
\[
W(u,v)\le C(u,v)\le M(u,v),\qquad u,v\in I.
\]
\noindent\textbf{Proof:}
Let $C$ be a copula on $I^2$, and fix $u,v\in I$. Since $C$ is a bivariate distribution function, it is nondecreasing in each coordinate. If $u\le v$, then by monotonicity in the second coordinate and the boundary condition $C(u,1)=u$,
\[
C(u,v)\le C(u,1)=u=\min\{u,v\}.
\]
If $v\le u$, then by monotonicity in the first coordinate and $C(1,v)=v$,
\[
C(u,v)\le C(1,v)=v=\min\{u,v\}.
\]
Thus $C(u,v)\le M(u,v)=\min\{u,v\}$. For the lower bound, use the 2-increasing property with $u_1=u$, $u_2=1$, $v_1=v$, and $v_2=1$ to obtain
\[
C(1,1)-C(1,v)-C(u,1)+C(u,v)\ge 0.
\]
Using the boundary conditions $C(1,1)=1$, $C(1,v)=v$, and $C(u,1)=u$, this becomes
\[
1-v-u+C(u,v)\ge 0,
\]
or $C(u,v)\ge u+v-1$. Because any copula is a cdf, $C(u,v)\ge 0$, so
\[
C(u,v)\ge\max\{u+v-1,0\}=W(u,v).
\]
Combining the two inequalities yields $W(u,v)\le C(u,v)\le M(u,v)$ for all $u,v\in I$. \qed

The copula $\Pi$ corresponds to the case of independent random variables (see
Definition~1.3.3); $M$ and $W$ correspond to the Fr\'echet bounds in the general
dependence ordering \citep{Nelsen1999}.

\subsection{Sklar's Theorem}
\label{subsec:4.8.2}

The central result connecting copulas with joint distributions is due to \citet{Sklar1959}
. It shows how any joint cdf $F_{X,Y}$
studied in Sections~4.1 and~4.5 can be decomposed into its marginals
(Sections~4.1 and~4.2) and a copula.

\begin{theorem}[Sklar's Theorem]\label{thm:4.8.2}
Let $F_{X,Y}$ be the joint cdf of a pair of random variables $(X,Y)$ with
marginal cdfs $F_X$ and $F_Y$.
\begin{enumerate}
\item[(a)] There exists a copula $C$ such that
\[
F_{X,Y}(x,y)=C\bigl(F_X(x),F_Y(y)\bigr),\qquad x,y\in\mathbb{R}.
\]
\item[(b)] If $F_X$ and $F_Y$ are continuous, then $C$ is unique.
\end{enumerate}
Conversely, if $C$ is any copula and $F_X,F_Y$ are univariate cdfs, then
\[
F(x,y)=C\bigl(F_X(x),F_Y(y)\bigr)
\]
is a bivariate cdf with marginals $F_X$ and $F_Y$.
\end{theorem}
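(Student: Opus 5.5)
I would prove the three parts in the order: converse, then (b) together with the existence part of (a) in the continuous case, and finally (a) for general marginals. That last step is where the real difficulty lies.

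\emph{Converse.} Let $U,V$ be a pair with joint cdf $C$. Since $C$ is a cdf with uniform $(0,1)$ marginals, $U$ and $V$ are each uniform. Let $F_X^{-1}(t)=\inf\{x:F_X(x)\ge t\}$ denote the generalized inverse, and define $F_Y^{-1}$ likewise. The key fact is the equivalence $F_X^{-1}(t)\le x \iff t\le F_X(x)$, which follows from right-continuity and monotonicity of $F_X$. Set $X=F_X^{-1}(U)$ and $Y=F_Y^{-1}(V)$. Then
\[
P(X\le x,Y\le y)=P\bigl(U\le F_X(x),V\le F_Y(y)\bigr)=C\bigl(F_X(x),F_Y(y)\bigr),
\]
so $F$ is the cdf of a genuine random vector and hence a bivariate cdf. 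Its marginals are found by letting $y\to\infty$, which gives $C(F_X(x),1)=F_X(x)$, and similarly for $Y$. (Boundary values $0,1$ are understood via the limits implicit in Definition~\ref{def:4.8.1}.)

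\emph{Continuous case, existence and uniqueness.} If $F_X$ and $F_Y$ are continuous, the probability integral transform (Theorem~2.1.10) shows that $U=F_X(X)$ and $V=F_Y(Y)$ are uniform $(0,1)$. Let $C$ be the joint cdf of $(U,V)$; it is grounded, 2-increasing and has uniform margins, so it is a copula. Using $F_X(F_X^{-1}(u))=u$ for continuous $F_X$, together with the equivalence above, we get
\[
\{F_X(X)\le F_X(x)\}=\{X\le x\}
\]
up to a null set. This yields $F_{X,Y}(x,y)=C(F_X(x),F_Y(y))$. For uniqueness, note that continuity makes the ranges of $F_X$ and $F_Y$ contain all of $I$. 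Hence any copula $C'$ representing $F_{X,Y}$ must satisfy
\[
C'(u,v)=F_{X,Y}\bigl(F_X^{-1}(u),F_Y^{-1}(v)\bigr)
\]
for every $u,v\in I$, so it is determined completely.

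\emph{General (a): the main obstacle.} When $F_X$ or $F_Y$ has jumps, $F_X(X)$ is no longer uniform. I would handle this with the distributional transform. Take $\Lambda,\Lambda'$ independent uniform $(0,1)$, independent of $(X,Y)$, and set
\[
U=F_X(X-)+\Lambda\bigl(F_X(X)-F_X(X-)\bigr),
\]
with $V$ defined analogously using $\Lambda'$. The hard point is to check that $U$ is exactly uniform. I would do this by computing $P(U\le t)$: split on whether $t$ lies inside a jump interval $[F_X(x-),F_X(x)]$, where linear interpolation in $\Lambda$ gives mass $t-F_X(x-)$ plus $P(X<x)=F_X(x-)$; otherwise the event reduces to a continuous-part event of probability $t$. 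One also needs $F_X^{-1}(U)=X$ almost surely. Let $C$ be the joint cdf of $(U,V)$. Because $U\le F_X(x)$ if and only if $X\le x$ (almost surely), the same computation as in the continuous case gives $F_{X,Y}(x,y)=C(F_X(x),F_Y(y))$.

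If this randomization feels too far beyond the text's level, the fallback is the analytic route. One defines $C$ on $\operatorname{Ran}F_X\times\operatorname{Ran}F_Y$ by the formula above, extends it to the closure by continuity (it is Lipschitz, since $|F(x,y)-F(x',y)|\le|F_X(x)-F_X(x')|$), and then extends bilinearly across the gaps. The bilinear step is the most technical, since one must verify that the 2-increasing property survives the extension.
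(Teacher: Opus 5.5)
Your proof is correct, and it takes a probabilistic route where the paper's is analytic. The paper uses the single formula $C(u,v)=F_{X,Y}\bigl(F_X^{-1}(u),F_Y^{-1}(v)\bigr)$ and asserts that the copula axioms are routine. For the converse it checks only that $C(F_X(x),F_Y(y))$ is monotone in each argument, right-continuous and has the right limits, which omits the 2-increasing property. You instead realize everything as the law of an actual random vector: $X=F_X^{-1}(U)$, $Y=F_Y^{-1}(V)$ for the converse, and the joint law of $(U,V)$ for existence. The copula axioms and the rectangle inequality then come for free. In the continuous case your $C$ coincides with the paper's, and uniqueness is argued identically.

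The difference matters for general margins. The paper's formula still reproduces $F_{X,Y}$, but it is not a copula when a margin has atoms. For $X=Y\sim$ Bernoulli$(1/2)$ it gives $C(u,v)=1/2$ for all $u,v\in(0,1/2]$, so it is neither grounded nor has uniform margins, and it violates $C\le M$. Your distributional transform repairs exactly this by spreading each atom over its jump interval, at the cost of an auxiliary randomization. Your analytic fallback (subcopula on $\operatorname{Ran}F_X\times\operatorname{Ran}F_Y$, extension to the closure, bilinear interpolation) is the classical repair of the paper's route. The freedom in how the gaps are filled is also why uniqueness requires the continuity assumption of (b).

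One simplification: you do not need $F_X^{-1}(U)=X$ a.s. Since $U\le F_X(X)$, you have $\{X\le x\}\subseteq\{U\le F_X(x)\}$. Once $U$ is known to be uniform, both events have probability $F_X(x)$, so they agree up to a null set. The same argument with $U=F_X(X)$ also justifies your continuous case more cleanly than the appeal to $F_X(F_X^{-1}(u))=u$, which on its own does not give the null-set statement.
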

\noindent\textbf{Proof:}
For $x,y\in\mathbb{R}$, define
\[
u=F_X(x),\qquad v=F_Y(y),
\]
and let $H(x,y)=F_{X,Y}(x,y)$ denote the joint cdf of $(X,Y)$. For part (a), we
must construct a function $C$ on $[0,1]^2$ such that $H(x,y)=C(F_X(x),F_Y(y))$.

First suppose that $F_X$ and $F_Y$ are continuous and strictly increasing. Then
each cdf has a (generalized) inverse $F_X^{-1}$ and $F_Y^{-1}$, and for any
$u,v\in(0,1)$ we may define
\[
C(u,v)=H\bigl(F_X^{-1}(u),F_Y^{-1}(v)\bigr).
\]
If $u=F_X(x)$ and $v=F_Y(y)$, then $F_X^{-1}(u)=x$ and $F_Y^{-1}(v)=y$, so
\[
C\bigl(F_X(x),F_Y(y)\bigr)
=H(x,y)
=F_{X,Y}(x,y),
\]
which gives the desired representation. It is routine to check from the
properties of $H$ and the marginals that $C$ satisfies the copula axioms and
has uniform margins on $(0,1)$.

For the general case, when $F_X$ and $F_Y$ need not be strictly increasing, we
use generalized inverses. For $0<u<1$, define
\[
F_X^{-1}(u)=\inf\{x:F_X(x)\ge u\},\qquad
F_Y^{-1}(v)=\inf\{y:F_Y(y)\ge v\},
\]
and set
\[
C(u,v)
=H\bigl(F_X^{-1}(u),F_Y^{-1}(v)\bigr),\qquad 0<u,v<1.
\]
Using the fact that $F_X(F_X^{-1}(u))\ge u$ and that $F_X^{-1}(F_X(x))\le x$
(and similarly for $F_Y$), together with the right-continuity and monotonicity
properties of $H$, one checks that
\[
H(x,y)
=C\bigl(F_X(x),F_Y(y)\bigr)
\]
for all $x,y$. Again, it follows from the construction that $C$ is grounded,
has uniform margins, and is 2-increasing, hence is a copula.

For part (b), suppose in addition that $F_X$ and $F_Y$ are continuous. If
$C_1$ and $C_2$ are copulas such that
\[
H(x,y)
=C_1\bigl(F_X(x),F_Y(y)\bigr)
=C_2\bigl(F_X(x),F_Y(y)\bigr)
\]
for all $x,y$, then for any $u,v\in(0,1)$ we can write $u=F_X(x)$ and
$v=F_Y(y)$ for some $x,y$, and hence
\[
C_1(u,v)=C_2(u,v).
\]
Thus $C_1=C_2$ on $(0,1)^2$, so the copula is unique when the marginals are
continuous.

For the converse, let $C$ be any copula and let $F_X$ and $F_Y$ be univariate
cdfs. Define
\[
F(x,y)=C\bigl(F_X(x),F_Y(y)\bigr).
\]
Since $C$ is a bivariate cdf on $[0,1]^2$ and $F_X,F_Y$ are cdfs on
$\mathbb{R}$, the composition $F$ is nondecreasing in each argument, right
continuous, and has the correct limits at $\pm\infty$, so $F$ is a bivariate
cdf. Moreover,
\[
\lim_{y\to\infty}F(x,y)
=\lim_{y\to\infty}C\bigl(F_X(x),F_Y(y)\bigr)
=C\bigl(F_X(x),1\bigr)
=F_X(x),
\]
and similarly $\lim_{x\to\infty}F(x,y)=F_Y(y)$, so the marginals of $F$ are
$F_X$ and $F_Y$. This completes the proof. \qed

\begin{example}[Exponential margins with dependence]\label{ex:4.8.1}
Let $X$ and $Y$ be nonnegative random variables with exponential marginals
\[
F_X(x)=1-e^{-\lambda x},\quad F_Y(y)=1-e^{-\mu y},\qquad x,y\ge 0,
\]
and suppose their joint cdf is
\[
F_{X,Y}(x,y)=C\bigl(F_X(x),F_Y(y)\bigr),
\]
for some copula $C$.

\begin{enumerate}
\item[(a)] If $C(u,v)=uv$ (the product copula), then
$F_{X,Y}(x,y)=F_X(x)F_Y(y)$ for all $x,y\ge 0$, and $X$ and $Y$ are independent
(Section~1.3).
\item[(b)] If instead we take, for some $\theta>0$, the so-called Clayton copula
\[
C_\theta(u,v)=\bigl(u^{-\theta}+v^{-\theta}-1\bigr)^{-1/\theta},
\]
we obtain a model with the same exponential marginals but positive lower tail
dependence (Section~\ref{subsec:4.8.4}). In this case, large simultaneous values
of $X$ and $Y$ are more likely than under independence, even though the marginal
distributions are unchanged.
\end{enumerate}

This example shows how the copula isolates the dependence structure from the
choice of marginal distributions.
\end{example}

Thus any bivariate distribution with continuous marginals can be decomposed into
its margins and a copula that encodes the dependence structure
\citep{Nelsen1999}. Conversely, by choosing any margins (for example, from the
families in Chapter~3) and any copula, we obtain a valid bivariate distribution.

\subsection{Probability Integral Transform and Invariance}
\label{subsec:4.8.3}

Sklar's theorem is closely tied to the probability integral transform, which
appeared implicitly in Section~2.1 when we discussed distributions of functions
of a random variable \citep[see, e.g.,][]{Nelsen1999}.

If $X$ has continuous cdf $F_X$, then $U=F_X(X)$ is Uniform$(0,1)$; similarly,
if $Y$ has cdf $F_Y$, then $V=F_Y(Y)$ is Uniform$(0,1)$. (Compare this with the
construction of random variables from uniforms in Section~5.5.) If $(X,Y)$ has
copula $C$, then
\[
P(U\le u,V\le v)=C(u,v),\qquad 0<u,v<1 .
\]
Hence copulas are exactly the joint distributions of the transformed pair
$(U,V)=(F_X(X),F_Y(Y))$. This is a specific instance of the
general transformation ideas in Chapter~2.

Copulas are invariant under strictly increasing transformations of the margins.
This invariance is analogous to the invariance principles discussed later in
Chapter~6, but here at the level of probability models rather than estimators.

\begin{theorem}[Invariance]\label{thm:4.8.5}
Let $(X,Y)$ be continuous with copula $C$. Let $g$ and $h$ be strictly
increasing functions, and define $Z=g(X)$, $T=h(Y)$. Then $(Z,T)$ has the same
copula $C$ \citep{Nelsen1999}.
\end{theorem}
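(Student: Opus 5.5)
The plan is to compute the joint cdf of $(Z,T)$ directly, express it through the copula $C$ of $(X,Y)$ and the marginal cdfs of $Z$ and $T$, and then appeal to the uniqueness part (b) of Sklar's Theorem (Theorem~\ref{thm:4.8.2}). Since $(X,Y)$ is continuous, $F_X$ and $F_Y$ are continuous. By Theorem~\ref{thm:4.8.2}(b), $C$ is the unique copula with $F_{X,Y}(x,y)=C(F_X(x),F_Y(y))$.

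The first step is to relate the marginals. Because $g$ is strictly increasing, $\{g(X)\le z\}=\{X\le g^{-1}(z)\}$, where for $z$ outside the range of $g$ we interpret $g^{-1}(z)$ as $\sup\{x:g(x)\le z\}$ (possibly $\pm\infty$). Hence $F_Z(z)=F_X(g^{-1}(z))$, and similarly $F_T(t)=F_Y(h^{-1}(t))$.

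The same event identities give the joint cdf:
\[
F_{Z,T}(z,t)=P\bigl(X\le g^{-1}(z),\,Y\le h^{-1}(t)\bigr)=C\bigl(F_X(g^{-1}(z)),F_Y(h^{-1}(t))\bigr)=C\bigl(F_Z(z),F_T(t)\bigr).
\]
So $C$ is a copula for $(Z,T)$ in the sense of Theorem~\ref{thm:4.8.2}(a).

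For uniqueness, we need $F_Z$ and $F_T$ to be continuous. Since $g$ is injective, $P(Z=z)\le P(X\in g^{-1}(\{z\}))$, and $g^{-1}(\{z\})$ is at most a single point. Continuity of $F_X$ then gives $P(Z=z)=0$, so $F_Z$ has no jumps and is continuous; the same argument applies to $F_T$. Theorem~\ref{thm:4.8.2}(b) therefore says $C$ is the unique copula of $(Z,T)$, which proves the claim.

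The main obstacle is the bookkeeping when $g$ is not continuous or not onto $\mathbb{R}$, so that $g^{-1}(z)$ is not literally defined. I would handle this by working with the set identity $\{x:g(x)\le z\}=(-\infty,a_z]$ or $(-\infty,a_z)$ with $a_z=\sup\{x:g(x)\le z\}$. Because $F_X$ is continuous, whether the endpoint is included does not affect any probability. The displayed identities then hold with $F_X(a_z)$ in place of $F_X(g^{-1}(z))$, and the argument goes through unchanged.
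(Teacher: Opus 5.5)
Your proposal is correct and follows the same route as the paper: you rewrite $F_{Z,T}(z,t)$ as $F_{X,Y}\bigl(g^{-1}(z),h^{-1}(t)\bigr)$, apply Sklar's representation to get $C\bigl(F_Z(z),F_T(t)\bigr)$, and conclude. Your explicit checks fill in steps the paper leaves implicit. You verify that $F_Z$ and $F_T$ are continuous, so Theorem~\ref{thm:4.8.2}(b) makes $C$ the unique copula of $(Z,T)$, and you handle $g^{-1}$ when $g$ is discontinuous or not onto.
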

\noindent\textbf{Proof:}
Let $F_X$ and $F_Y$ be the marginal cdfs of $X$ and $Y$, and let $F_{X,Y}$ be
their joint cdf. Since $(X,Y)$ has copula $C$, Sklar's Theorem
(Theorem~\ref{thm:4.8.2}) implies
\[
F_{X,Y}(x,y)=C\bigl(F_X(x),F_Y(y)\bigr),\qquad x,y\in\mathbb{R}.
\]
Because $g$ and $h$ are strictly increasing, the events $\{g(X)\le z\}$ and
$\{h(Y)\le t\}$ are equivalent to $\{X\le g^{-1}(z)\}$ and $\{Y\le h^{-1}(t)\}$,
respectively. Thus, for $z,t\in\mathbb{R}$,
\[
P(Z\le z,T\le t)
=P\bigl(X\le g^{-1}(z),\,Y\le h^{-1}(t)\bigr)
=F_{X,Y}\bigl(g^{-1}(z),h^{-1}(t)\bigr).
\]
Let $F_Z$ and $F_T$ be the marginal cdfs of $Z$ and $T$. Since $g$ and $h$ are
strictly increasing,
\[
F_Z(z)=P(Z\le z)=P\bigl(X\le g^{-1}(z)\bigr)=F_X\bigl(g^{-1}(z)\bigr),
\]
\[
F_T(t)=P(T\le t)=P\bigl(Y\le h^{-1}(t)\bigr)=F_Y\bigl(h^{-1}(t)\bigr).
\]
Hence, for $z,t\in\mathbb{R}$,
\begin{eqnarray*}
F_{Z,T}(z,t)
&=& P(Z\le z,T\le t)
=F_{X,Y}\bigl(g^{-1}(z),h^{-1}(t)\bigr)
\\ &=& C\Bigl(F_X\bigl(g^{-1}(z)\bigr),F_Y\bigl(h^{-1}(t)\bigr)\Bigr)
=C\bigl(F_Z(z),F_T(t)\bigr).
\end{eqnarray*}
This is exactly the Sklar representation of $(Z,T)$ with copula $C$, so
$(Z,T)$ has the same copula as $(X,Y)$. \qed

\begin{example}[Ranks and the empirical copula]\label{ex:4.8.2}
Suppose $(X_i,Y_i)$, $i=1,\dots,n$, is a random sample from a continuous
bivariate distribution with copula $C$. Let $R_i$ be the rank of $X_i$ among
$X_1,\dots,X_n$, and let $S_i$ be the rank of $Y_i$ among $Y_1,\dots,Y_n$.
Define the pseudo-observations
\[
U_i=\frac{R_i}{n+1},\qquad V_i=\frac{S_i}{n+1},\qquad i=1,\dots,n.
\]
Each $(U_i,V_i)$ lies in $(0,1)^2$, and these points can be viewed as a sample
from an approximation to $(U,V)=(F_X(X),F_Y(Y))$, whose joint distribution is
the copula $C$.
The empirical copula is then
\[
C_n(u,v)
=\frac{1}{n}\sum_{i=1}^n
\mathbf{1}\{U_i\le u,\,V_i\le v\},\qquad 0\le u,v\le 1.
\]
As $n$ increases, $C_n(u,v)$ converges to $C(u,v)$ for each fixed $(u,v)$, in a
manner analogous to the convergence of the empirical cdf to the true cdf later in
Section~5.4. Thus, the dependence structure can be studied using only the ranks,
without specifying the marginal distributions.
\end{example}

Thus the copula depends only on the rank structure of the data and not on the
particular marginal scales. This is the basis for rank-based inference for
copula models (see Section~7.2, where rank-based procedures are used for
correlation and nonparametric measures of association).

If $g$ or $h$ is decreasing, the copula transforms in a simple way (for example,
via reflections), but the dependence structure remains fully described at the
copula level \citep{Nelsen1999}.

\subsection{Dependence Concepts and Measures}
\label{subsec:4.8.4}

Copulas formalize several notions of dependence and yield measures that are
invariant under monotone transformations \citep{Nelsen1999,Joe1997}. Earlier in
the book, dependence was described in terms of independence (Section~1.3),
covariance, and correlation (Section~2.2). These classical measures depend on
second moments and are not invariant under nonlinear transformations.
Copula-based measures remedy this.

\begin{definition}
    Random variables $X$ and $Y$ are said to be \textbf{positively quadrant
dependent (PQD)} if
\[
P(X\le x,Y\le y)\ge P(X\le x)P(Y\le y)
\]
for all $x,y$. Equivalently, in terms of the copula $C$,
\[
C(u,v)\ge uv,\qquad 0<u,v<1.
\]
\end{definition}
Negative quadrant dependence (NQD) is defined by reversing the inequality.
Because PQD can be expressed entirely in terms of the copula, it is invariant
under strictly increasing transformations of the margins, in contrast with
Pearson correlation (see the discussion at the end of Section~2.2).

\begin{definition}
\textbf{Kendall's $\tau$} is a widely used measure of concordance that depends only on the
copula. For a continuous bivariate distribution with
copula $C$,
\[
\tau
=4\int_0^1\!\!\int_0^1 C(u,v)\,dC(u,v)-1.
\]
Equivalently,
\[
\tau
=P\bigl((X_1-X_2)(Y_1-Y_2)>0\bigr)
-P\bigl((X_1-X_2)(Y_1-Y_2)<0\bigr),
\]
where $(X_1,Y_1)$ and $(X_2,Y_2)$ are independent copies.
\end{definition}
Kendall's $\tau$ satisfies:
\begin{itemize}
\item $-1\le\tau\le 1$, with $\tau=1$ (resp.\ $-1$) if and only if $C=M$ (resp.\
$C=W$), that is, perfect increasing (resp.\ decreasing) functional dependence.
\item $\tau=0$ if $X$ and $Y$ are independent (i.e., $C=\Pi$); compare this with
$\Corr(X,Y)=0$ in Section~2.2.
\item $\tau$ is invariant under strictly monotone transformations of $X$ and/or
$Y$.
\end{itemize}

In practice, $\tau$ is estimated by the sample proportion of concordant minus
discordant pairs, based only on the ranks (see also rank-based methods in
Chapter~7) \citep{Nelsen1999}.

\begin{definition}
\textbf{Spearman's 
$\rho$} is the ordinary correlation of the rank variables and can also be
expressed in terms of the copula. Let $U=F_X(X)$, $V=F_Y(Y)$; then
\[
\rho_S=\Corr(U,V)
=12\int_0^1\!\!\int_0^1\bigl(C(u,v)-uv\bigr)\,du\,dv. 
\]
\end{definition}

\begin{example}[Pearson correlation versus Spearman's $\rho$]\label{ex:4.8.8}
Let $X$ be standard normal, and define $Y=X^3$. Then $Y$ is a strictly
increasing function of $X$, so the copula of $(X,Y)$ is the upper Fr\'echet
bound $M(u,v)=\min\{u,v\}$. In particular, both Kendall's $\tau$ and Spearman's $\rho$
are equal to $1$ for this pair.

However, Pearson's correlation coefficient between $X$ and $Y$ is strictly less
than $1$, because the relationship between $X$ and $Y$ is nonlinear. This
illustrates how copula-based measures such as $\tau$ and $\rho_S$ capture
monotone dependence, while Pearson correlation measures only linear association
(Section~2.2).
\end{example}

Spearman's $\rho$ shares properties analogous to Kendall's $\tau$: it is bounded
between $-1$ and $1$, equals $\pm 1$ for $C=M$ or $C=W$, is $0$ under
independence, and is invariant under strictly increasing transformations of the
margins \citep{Nelsen1999}. Unlike Pearson correlation $\rho$ (Section~2.2),
$\rho_S$ is defined without second-moment assumptions.

These measures are especially useful for parametric copula families, where
$\tau$ or $\rho_S$ can be expressed as simple functions of the dependence
parameter and inverted to give rank-based estimators (cf.\ the method of
moments in Section~7.2).

\begin{definition}
Copulas also well describe extremal dependence.
The \textbf{lower tail
dependence coefficient} is
\[
\lambda_L
=\lim_{u\downarrow 0} P\bigl(Y\le F_Y^{-1}(u)\mid X\le F_X^{-1}(u)\bigr)
=\lim_{u\downarrow 0} \frac{C(u,u)}{u},
\]
when the limit exists. The \textbf{upper tail dependence coefficient} is
\[
\lambda_U
=\lim_{u\uparrow 1} \frac{1-2u+C(u,u)}{1-u}.
\]
\end{definition}

These quantities depend only on the copula and measure the strength of
dependence in the joint lower and upper tails, respectively 
. For example, the Gaussian copula (see next example) with correlation $|\rho|<1$ has
$\lambda_L=\lambda_U=0$, so extreme events are asymptotically independent even
when the linear correlation is high \citep{Joe1997}.

\begin{example}[Gaussian and Gumbel tail behavior]\label{ex:4.8.9}
Consider two dependence models with standard normal marginals:

\begin{enumerate}
\item[(a)] A bivariate normal model with correlation $\rho$, whose copula is
the Gaussian copula $C_\rho$,
\[
C_\rho(u,v)
=\Phi_2\bigl(\Phi^{-1}(u),\Phi^{-1}(v);\rho\bigr),
\]
where $\Phi_2\bigl(\cdot,\cdot;\rho\bigr)$ denotes the bivariate normal cdf with zero means, unit variances, and correlation coefficient $\rho \in (-1,1)$ as in Example~3.3.5.
\item[(b)] A Gumbel copula $C_\theta$ with parameter $\theta\ge 1$, 
\[
C_\theta(u,v)
=\exp\left(
-\bigl((-\log u)^\theta+(-\log v)^\theta\bigr)^{1/\theta}
\right),
\]
combined with
standard normal marginals via Sklar's Theorem.
\end{enumerate}

In both cases, Kendall's $\tau$ and Spearman's $\rho$ can be chosen to be similar by
an appropriate choice of $\theta$, so the overall strength of dependence is
comparable. However, the Gaussian copula has $\lambda_L=\lambda_U=0$, so the
probability of very large simultaneous exceedances decays as if the tails were
asymptotically independent. In contrast, the Gumbel copula has $\lambda_U=2-2^{1/\theta}>0$ (and $\lambda_L=0$), making simultaneous large positive extremes substantially more
likely. 

Concretely, as for the Gaussian copula Kendall's $\tau$ is
\[
\tau_{\text{Gauss}}(\rho)
= \frac{2}{\pi}\arcsin(\rho),
\] setting $\rho=0.8$ yields 
\[
\tau_{\text{Gauss}}(0.8)
= \frac{2}{\pi}\cdot 0.9273
\approx \frac{1.8546}{3.1416}
\approx 0.59.
\]
For the Gumbel copula Kendall's $\tau$ is
$
\tau_{\text{Gumbel}}(\theta)
= 1 - \frac{1}{\theta}.
$
So we set
\[
\tau_{\text{Gumbel}}(\theta)
=\tau_{\text{Gauss}}(0.8)=1 - \frac{1}{\theta}
\approx 0.59.
\]
Solving for $\theta$,
\[
\frac{1}{\theta}\approx 1-0.59=0.41,
\qquad
\theta\approx \frac{1}{0.41}\approx 2.44,
\]
and corresponding $\lambda_U=0.67$. 
\end{example}

This example underscores how copulas allow us to tune not only the
overall dependence but also the extremal (tail) dependence.

\subsection{Visualization of Copulas}
\label{subsec:4.8.5}
Copulas may be visualized most naturally through their behavior on the unit
square. One approach is to plot the copula cdf $C(u,v)$ as a surface over
$(u,v)\in[0,1]^2$; for example, the independence copula appears as the smooth
surface $uv$, while strong positive dependence pulls the surface upward toward
the upper Fr\'echet bound $M(u,v)=\min\{u,v\}$. For absolutely continuous
copulas it is also useful to plot the so-called copula density $c(u,v)
=\frac{\partial^2}{\partial u\,\partial v} C(u,v)$, where regions of
high density indicate values of $(U,V)$ that occur more frequently. In applied
work, empirical copulas are often examined by scatterplots of the
pseudo-observations $(U_i,V_i)$, or by contour plots and heatmaps on the unit
square, which display the dependence structure independently of the marginal
distributions.

\begin{figure}[htbp]
  \centering
  \includegraphics[width=\textwidth]{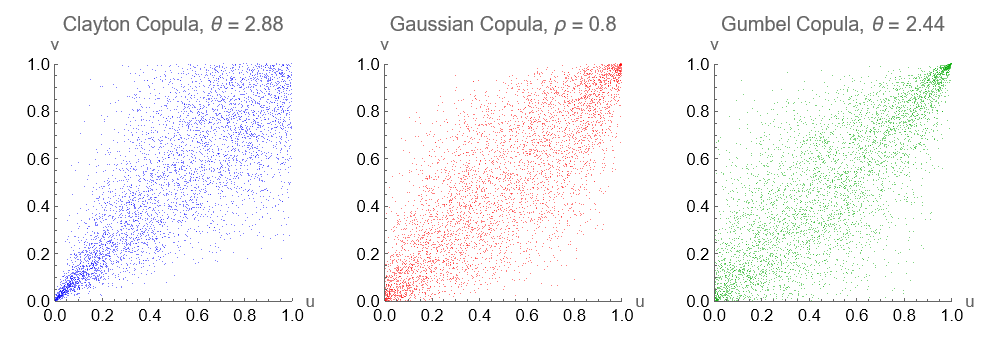}
  \caption{Scatterplots of pseudo-observations from the Clayton copula
    ($\theta=2.88$), Gaussian copula ($\rho=0.8$), and Gumbel copula
    ($\theta=2.44$).}
\label{fig:4.8.1}
\end{figure}

In Figure \ref{fig:4.8.1} we plot three such scatterplots generated by the Mathematica code given in Example A.0.9 in the Appendix. 
These illustrate how different copulas, even with the same margins and similar Kendall's $\tau$, can encode very different dependence structures, particularly in
the tails \citep{Nelsen1999,Joe1997}.

\setcounter{section}{5}
\renewcommand{\thesection}{7.\arabic{section}}
\section{Inference for Copula Models}
\label{sec:7.6}

Suppose that $(X_1,Y_1),\dots,(X_n,Y_n)$ is a random sample from a distribution
with continuous marginals $F_X,F_Y$ and copula $C_\theta$ indexed by a
parameter~$\theta$. The likelihood for $(X,Y)$ decomposes into marginal and
copula parts, paralleling the decomposition of models into components that
underlies the likelihood principle in Chapter~6 \citep[see, e.g.,][]{Joe1997}.

Because the copula is invariant under strictly increasing transformations of the
margins, it is natural to base inference about $\theta$ on the ranks (or,
equivalently, on empirical versions of $U_i=F_X(X_i)$ and $V_i=F_Y(Y_i)$). Let
$R_i$ and $S_i$ be the ranks of $X_i$ and $Y_i$ among $X_1,\dots,X_n$ and
$Y_1,\dots,Y_n$, respectively. The \textbf{empirical copula} is defined on
$[0,1]^2$ by
\[
C_n(u,v)
=\frac{1}{n}
\sum_{i=1}^n
\mathbf{1}\Bigl(\frac{R_i}{n+1}\le u,\frac{S_i}{n+1}\le v\Bigr).
\]

This is a rank-based estimator of $C$; under mild conditions, $C_n(u,v)\to
C(u,v)$ for each fixed $(u,v)$ \citep{GenestFavre2007}. The use of ranks parallels
the rank-based procedures for nonparametric inference introduced in Chapter~8.

While full maximum likelihood estimation is certainly possible, but could be numerically unstable, several other rank-based estimation strategies are used in practice:
\begin{itemize}
\item \emph{Method of moments via Kendall's tau or Spearman's rho.} If
$\tau=\tau(\theta)$ (or $\rho_S=\rho_S(\theta)$) is known in closed form, one
can estimate $\theta$ by solving $\tau(\hat\theta)=\hat\tau$ (or
$\rho_S(\hat\theta)=\hat\rho_S$), where $\hat\tau$ or $\hat\rho_S$ is computed
from the sample ranks (cf.\ Section~7.2) \citep{GenestFavre2007}.
\item \emph{Maximum pseudolikelihood.} If the copula has density
$c_\theta(u,v)$, one may maximize
\[
\ell(\theta)
=\sum_{i=1}^n
\log c_\theta\Bigl(\frac{R_i}{n+1},\frac{S_i}{n+1}\Bigr)
\]
with respect to $\theta$. Under suitable conditions, the maximizer $\hat\theta$
is consistent and asymptotically normal, and its large-sample behavior can be
analyzed using the methods of Chapter~10 \citep{Joe1997}.
\item \emph{Two-step methods.} One may first estimate marginal parameters
by univariate methods (Chapter~7) and then estimate $\theta$ by maximizing a
``copula likelihood'' built from the transformed data, but this approach is
sensitive to misspecification of the margins \citep{Joe1997}.
\end{itemize}

Rank-based methods have the advantage of depending only on the copula and
remaining valid under arbitrary strictly increasing marginal transformations, in
line with the invariance considerations emphasized in Chapter~6
\citep{GenestFavre2007}.

\begin{example}[Estimating a Clayton copula parameter]\label{ex:4.8.5}
Suppose $(X_i,Y_i)$, $i=1,\dots,n$, is a sample from a model with continuous
marginals and Clayton copula
\[
C_\theta(u,v)
=\bigl(u^{-\theta}+v^{-\theta}-1\bigr)^{-1/\theta},\qquad \theta>0.
\]

For this family, Kendall's tau is a simple function of $\theta$:
\[
\tau(\theta)=\frac{\theta}{\theta+2}.
\]
Let $\hat\tau$ be the sample Kendall's tau computed from the ranks of the data.
A method-of-moments estimator of $\theta$ is obtained by solving
$\tau(\hat\theta)=\hat\tau$, which gives
\[
\hat\theta=\frac{2\hat\tau}{1-\hat\tau}.
\]

This estimator depends only on the copula (through the ranks) and not on the
specific marginal distributions, in contrast with likelihood-based estimators
that require a full specification of the marginals (Chapter~7).
\end{example}

\newpage
\subsection*{Exercises for Section 4.10}
\begin{enumerate}

\item[4.66] (Copulas and monotone transformations.) Let $(X,Y)$ be continuous
with copula $C$, and let $Z=g(X)$, $T=h(Y)$ with $g,h$ strictly increasing.
\begin{enumerate}
\item[(a)] Using the change-of-variables ideas from Section~2.1, express the cdf
of $(Z,T)$ in terms of $F_{X,Y}$.
\item[(b)] Show that $(Z,T)$ has the same copula $C$.
\item[(c)] Let $(X,Y)$ be the bivariate normal pair of Example~4.5.3 with
correlation $\rho$. Take $g(x)=e^x$ and $h(y)=e^{3y}$. Simulate a moderate
sample from this model (see Section~5.5) and compare the scatter plots of
$(X,Y)$ and $(Z,T)$. Explain why the copula, and thus Kendall's tau, is the
same in both cases.
\end{enumerate}

\item[4.67] (PQD and covariance sign.) Suppose $(X,Y)$ is continuous and
positively quadrant dependent:
\[
P(X\le x,Y\le y)\ge P(X\le x)P(Y\le y)\quad\text{for all }x,y.
\]
\begin{enumerate}
\item[(a)] Express this condition in terms of the copula $C$.
\item[(b)] Show that if $X$ and $Y$ have finite second moments, then
$\Cov(X,Y)\ge 0$ (hint: proceed as in the proof of Theorem~2.2.6, but use the
PQD condition).
\item[(c)] Consider the model of Example~4.5.2, where $X$ and $Y$ are
independent exponentials with parameter $\lambda$ (hence $\Cov(X,Y)=0$). Modify
this model by introducing a Clayton copula with parameter $\theta>0$ while
keeping the same exponential marginals. Argue that the resulting pair is PQD,
and discuss how the covariance changes.
\end{enumerate}

\item[4.68] (Gaussian copula and tail independence.) Let $(Z_1,Z_2)$ be
bivariate normal with correlation $\rho$ (Section~3.3), and define
$U=\Phi(Z_1)$, $V=\Phi(Z_2)$.
\begin{enumerate}
\item[(a)] Argue that the copula of $(U,V)$ is the Gaussian copula $C_\rho$.
\item[(b)] Using the tail dependence formulas in
Section~\ref{subsec:4.8.4}, show heuristically that for $|\rho|<1$,
$\lambda_L=\lambda_U=0$.
\item[(c)] Explain in words why this means that even strongly correlated Gaussian
models may understate the probability of joint extreme events.
\end{enumerate}

\end{enumerate}

\newpage
\subsection*{Exercises for Section 7.6}
\begin{enumerate}
\item[7.67] (Kendall's $\tau$ and Archimedean copulas.) An Archimedean copula has
the form
\[
C(u,v)=\varphi^{-1}(\varphi(u)+\varphi(v)),
\]
where $\varphi$ is a convex, decreasing generator.

\begin{enumerate}
\item[(a)] For the Gumbel copula with parameter $\theta>0$,
verify that the generator is $\varphi(t)=t^{-\theta}-1$.
\item[(b)] Using the following formula for Kendall's $\tau$ in terms of the generator 
\[
\tau = 1 + 4 \int_0^1 \frac{\varphi(t)}{\varphi'(t)}\,dt,
\] derive $\tau(\theta)$ and base an estimator for $\theta$ on it.
\end{enumerate}

\end{enumerate}

\subsection*{Further Reading}

For a systematic treatment of copulas, their properties, and many families and
examples, see \citet{Nelsen1999}. A more advanced monograph with emphasis on
multivariate dependence and extreme-value structures is \citet{Joe1997}. For an
accessible introduction to inference for copula models based on ranks, with a
detailed worked example and a hydrological application, see
\citet{GenestFavre2007}.

\bibliographystyle{plainnat} 
\bibliography{copulas}

\appendix

\section*{Appendix}

\addcontentsline{toc}{section}{Appendix A: Mathematica Code for Copula Scatterplots}

\textbf{Example A.0.9 (Copula Scatterplots)}
 The following Mathematica code generates scatterplots of pseudo-observations
from three copula models with comparable overall dependence: a Clayton copula
with parameter $\theta=2.88$, a Gaussian copula with correlation $\rho=0.8$,
and a Gumbel copula with parameter $\theta=2.44$.

\begin{verbatim}
(* ============================================== *)
(* Copula scatterplots: Clayton, Gaussian, Gumbel *)
(* ============================================== *)

SeedRandom[12345];
n = 5000;

(* ---------------------------------------------- *)
(* 1. Clayton copula, parameter theta = 2.88      *)
(* ---------------------------------------------- *)

thetaClayton = 2.88;

(* Correct Clayton sampler using conditional distribution *)
claytonSample[n_, theta_] := Module[{u, w, v},
  u = RandomReal[{0, 1}, n];
  w = RandomReal[{0, 1}, n];
  v = ((w^(1/(1 + theta)) * u)^(-theta) + 1 - u^(-theta))^(-1/theta);
  Transpose[{u, v}]
];

claytonData = claytonSample[n, thetaClayton];

(* ---------------------------------------------- *)
(* 2. Gaussian copula, parameter rho = 0.8        *)
(* ---------------------------------------------- *)

rho = 0.8;
sigma = {{1, rho}, {rho, 1}};

gaussianSample[n_, sigma_] := Module[{z},
  z = RandomVariate[MultinormalDistribution[{0, 0}, sigma], n];
  CDF[NormalDistribution[0, 1], #] & /@ z
];

gaussianData = gaussianSample[n, sigma];

(* ---------------------------------------------- *)
(* 3. Gumbel copula, parameter theta = 2.44       *)
(* ---------------------------------------------- *)

thetaGumbel = 2.44;

(* Positive alpha-stable sampler, alpha = 1/theta *)
positiveStableSample[alpha_] := Module[{V, E},
  V = RandomReal[{0, Pi}, 1][[1]];
  E = RandomVariate[ExponentialDistribution[1]];
  (Sin[alpha V]/(Sin[V])^(1/alpha)) *
   (Sin[(1 - alpha) V]/E)^((1 - alpha)/alpha)
];

(* Marshall-Olkin sampler for Gumbel copula *)
gumbelSample[n_, theta_] := Module[{alpha, s, e1, e2, u, v},
  alpha = 1/theta;
  Table[
    s = positiveStableSample[alpha];
    e1 = RandomVariate[ExponentialDistribution[1]];
    e2 = RandomVariate[ExponentialDistribution[1]];
    u = Exp[-(e1/s)^alpha];
    v = Exp[-(e2/s)^alpha];
    {u, v},
    {n}
  ]
];

gumbelData = gumbelSample[n, thetaGumbel];

(* ---------------------------------------------- *)
(* 4. Scatterplots                                *)
(* ---------------------------------------------- *)

plotStyle[color_] := Directive[color, PointSize[0.0025], Opacity[0.5]];

claytonPlot = ListPlot[
  claytonData,
  PlotRange -> {{0, 1}, {0, 1}},
  AspectRatio -> 1,
  PlotStyle -> plotStyle[Blue],
  AxesLabel -> {"u", "v"},
  PlotLabel -> "Clayton Copula, \[Theta] = 2.88",
  ImageSize -> 350
];

gaussianPlot = ListPlot[
  gaussianData,
  PlotRange -> {{0, 1}, {0, 1}},
  AspectRatio -> 1,
  PlotStyle -> plotStyle[Red],
  AxesLabel -> {"u", "v"},
  PlotLabel -> "Gaussian Copula, \[Rho] = 0.8",
  ImageSize -> 350
];

gumbelPlot = ListPlot[
  gumbelData,
  PlotRange -> {{0, 1}, {0, 1}},
  AspectRatio -> 1,
  PlotStyle -> plotStyle[Darker[Green]],
  AxesLabel -> {"u", "v"},
  PlotLabel -> "Gumbel Copula, \[Theta] = 2.44",
  ImageSize -> 350
];

GraphicsRow[{claytonPlot, gaussianPlot, gumbelPlot}, Spacings -> 20]
\end{verbatim}

\newpage
\section*{Solutions}

\textbf{Solution to 4.66.}

Let $(X,Y)$ be a continuous bivariate random vector with joint cdf $F_{X,Y}$ and copula $C$, and let $g,h:\mathbb{R}\to\mathbb{R}$ be strictly increasing functions.  Define
\[
Z = g(X),\qquad T = h(Y).
\]

\bigskip

\noindent
\textbf{(a) Cdf of $(Z,T)$ in terms of $F_{X,Y}$.}

For any $z,t\in\mathbb{R}$,
\[
\begin{aligned}
F_{Z,T}(z,t)
&= \mathbb{P}(Z \le z,\; T \le t) \\
&= \mathbb{P}\bigl(g(X) \le z,\; h(Y) \le t\bigr).
\end{aligned}
\]
Since $g$ and $h$ are strictly increasing, the inequalities $g(X)\le z$ and $h(Y)\le t$ are equivalent to
\[
X \le g^{-1}(z),\qquad Y \le h^{-1}(t),
\]
where $g^{-1}$ and $h^{-1}$ denote the (strictly increasing) inverses of $g$ and $h$.  Hence
\[
F_{Z,T}(z,t)
= \mathbb{P}\bigl(X \le g^{-1}(z),\; Y \le h^{-1}(t)\bigr)
= F_{X,Y}\bigl(g^{-1}(z),\,h^{-1}(t)\bigr).
\]

\bigskip

\noindent
\textbf{(b) Copula of $(Z,T)$.}

Let $F_X$ and $F_Y$ be the marginal cdfs of $X$ and $Y$, and let $F_Z$ and $F_T$ be the marginal cdfs of $Z$ and $T$.  From part (a),
\[
\begin{aligned}
F_Z(z)
&= \mathbb{P}(Z \le z)
= \mathbb{P}\bigl(g(X) \le z\bigr)
= \mathbb{P}\bigl(X \le g^{-1}(z)\bigr)
= F_X\bigl(g^{-1}(z)\bigr), \\
F_T(t)
&= \mathbb{P}(T \le t)
= \mathbb{P}\bigl(h(Y) \le t\bigr)
= \mathbb{P}\bigl(Y \le h^{-1}(t)\bigr)
= F_Y\bigl(h^{-1}(t)\bigr).
\end{aligned}
\]

Let $u,v\in(0,1)$ and define
\[
z = F_Z^{-1}(u),\qquad t = F_T^{-1}(v).
\]
Using $F_Z(z)=F_X(g^{-1}(z))$ and the strict monotonicity of $g$,
\[
F_X\bigl(g^{-1}(z)\bigr) = u
\quad\Longrightarrow\quad
g^{-1}(z) = F_X^{-1}(u),
\]
so that
\[
z = g\bigl(F_X^{-1}(u)\bigr).
\]
Similarly,
\[
t = h\bigl(F_Y^{-1}(v)\bigr).
\]

The copula $C_{Z,T}$ of $(Z,T)$ is defined by
\[
C_{Z,T}(u,v)
= F_{Z,T}\bigl(F_Z^{-1}(u),\,F_T^{-1}(v)\bigr).
\]
Using part (a) and the relations above,
\[
\begin{aligned}
C_{Z,T}(u,v)
&= F_{Z,T}(z,t) \\
&= F_{X,Y}\bigl(g^{-1}(z),\,h^{-1}(t)\bigr) \\
&= F_{X,Y}\bigl(F_X^{-1}(u),\,F_Y^{-1}(v)\bigr).
\end{aligned}
\]
By definition of the copula $C$ of $(X,Y)$,
\[
C(u,v)
= F_{X,Y}\bigl(F_X^{-1}(u),\,F_Y^{-1}(v)\bigr),
\]
hence
\[
C_{Z,T}(u,v) = C(u,v),\qquad (u,v)\in[0,1]^2.
\]
Therefore $(Z,T)$ has the same copula $C$ as $(X,Y)$.

\bigskip

\noindent
\textbf{(c) Bivariate normal example and invariance of the copula.}

Let $(X,Y)$ be bivariate normal with correlation $\rho$, as in Example~4.5.3, and define
\[
Z = g(X) = e^{X},\qquad T = h(Y) = e^{3Y}.
\]
To simulate a sample of size $n$ from this model, one can:

\begin{enumerate}
\item Generate $n$ i.i.d.\ observations $(X_i,Y_i)$ from the bivariate normal distribution with correlation $\rho$.
\item For each $i$, set
\[
Z_i = e^{X_i},\qquad T_i = e^{3Y_i}.
\]
\end{enumerate}

The scatter plot of $(Z_i,T_i)$ is obtained from that of $(X_i,Y_i)$ by applying strictly increasing transformations in each coordinate.  These change the marginal distributions (normal to lognormal and scaled lognormal) and the scale of the axes, but they preserve the ordering of the points in each coordinate and hence the dependence structure encoded by the copula.

Part (b) shows rigorously that strictly increasing transformations in each margin leave the copula unchanged.  Kendall's tau is a measure of concordance that depends only on the copula (it is invariant under strictly increasing transformations of each coordinate).  Therefore $(X,Y)$ and $(Z,T)$ have the same copula and the same value of Kendall's tau; empirically, the estimated tau from the two scatter plots will (up to sampling error) be equal.

\bigskip
\noindent
\textbf{Solution to 4.67.}

Assume $(X,Y)$ is continuous with joint cdf $F_{X,Y}$, marginals $F_X,F_Y$, and copula $C$.

\bigskip

\noindent
\textbf{(a) PQD in terms of the copula.}

Positive quadrant dependence (PQD) means
\[
P(X\le x,Y\le y)\;\ge\;P(X\le x)\,P(Y\le y)\quad\text{for all }x,y\in\mathbb{R}.
\]
In terms of the cdfs this is
\[
F_{X,Y}(x,y)\;\ge\;F_X(x)\,F_Y(y)\quad\forall x,y.
\]
By Sklar’s theorem,
\[
F_{X,Y}(x,y)=C\bigl(F_X(x),F_Y(y)\bigr).
\]
Let $u=F_X(x)$ and $v=F_Y(y)$; then $u,v\in[0,1]$ and the PQD condition becomes
\[
C(u,v)\;\ge\;uv\quad\text{for all }(u,v)\in[0,1]^2.
\]
Thus $(X,Y)$ is PQD if and only if its copula $C$ satisfies $C(u,v)\ge uv$ on $[0,1]^2$.

\bigskip

\noindent
\textbf{(b) PQD implies nonnegative covariance.}

Assume $X$ and $Y$ have finite second moments.  We show $\Cov(X,Y)\ge0$.
Recall that for square–integrable $(X,Y)$ one can write
\[
\Cov(X,Y)
= \mathbb{E}(XY)-\mathbb{E}X\,\mathbb{E}Y
  = \iint_{\mathbb{R}^2}
     \bigl\{P(X>x,Y>y)-P(X>x)\,P(Y>y)\bigr\}\,dx\,dy,
\]
which is obtained by expanding $XY$ into indicator integrals and interchanging expectation and integration (as in the proof of Theorem~2.2.6).

Note that
\[
P(X>x,Y>y)
= 1-F_X(x)-F_Y(y)+F_{X,Y}(x,y),
\]
and
\[
P(X>x)\,P(Y>y)
= (1-F_X(x))(1-F_Y(y))
= 1-F_X(x)-F_Y(y)+F_X(x)F_Y(y).
\]
Hence
\[
P(X>x,Y>y)-P(X>x)\,P(Y>y)
= F_{X,Y}(x,y)-F_X(x)F_Y(y).
\]
Therefore
\[
\Cov(X,Y)
= \iint_{\mathbb{R}^2}
   \bigl(F_{X,Y}(x,y)-F_X(x)F_Y(y)\bigr)\,dx\,dy.
\]

If $(X,Y)$ is PQD, then $F_{X,Y}(x,y)\ge F_X(x)F_Y(y)$ for all $x,y$, so the integrand is everywhere nonnegative.  The integral of a nonnegative function is nonnegative, hence
\[
\Cov(X,Y)\;\ge\;0.
\]

\bigskip

\noindent
\textbf{(c) Clayton copula with exponential marginals.}

In Example~4.5.2, $X$ and $Y$ are independent exponentials with parameter $\lambda$:
\[
F_X(x)=F_Y(x)=1-e^{-\lambda x},\qquad x\ge0,
\]
and
\[
F_{X,Y}(x,y)=F_X(x)F_Y(y)
\quad\Rightarrow\quad
\Cov(X,Y)=0.
\]

Now keep these marginals but introduce dependence via a Clayton copula with parameter $\theta>0$:
\[
C_\theta(u,v)
= \left(u^{-\theta}+v^{-\theta}-1\right)^{-1/\theta},
\qquad u,v\in(0,1],\;\theta>0.
\]
The joint cdf of the new pair $(X_\theta,Y_\theta)$ is
\[
F_{X_\theta,Y_\theta}(x,y)
= C_\theta\bigl(F_X(x),F_Y(y)\bigr),\qquad x,y\ge0.
\]

\emph{PQD:}  For Clayton copulas with $\theta>0$ one has
\[
C_\theta(u,v)\;\ge\;uv,\qquad (u,v)\in[0,1]^2.
\]
Thus the copula $C_\theta$ lies above the independence copula $uv$, so by part (a) the resulting pair $(X_\theta,Y_\theta)$ is positively quadrant dependent.

\emph{Covariance:}  By part (b), PQD and finite second moments imply
\[
\Cov(X_\theta,Y_\theta)\;\ge\;0.
\]
When $\theta=0$ (the limit case) the copula reduces to the product copula $uv$, i.e.\ independence, so $\Cov(X_0,Y_0)=0$.  For $\theta>0$, the dependence becomes increasingly positive (stronger association of large values with large values and of small with small), and $\Cov(X_\theta,Y_\theta)$ becomes strictly positive and increases with $\theta$.  Thus introducing a Clayton copula with parameter $\theta>0$ while keeping exponential marginals turns the originally independent pair into a PQD pair with positive covariance, whose covariance grows as $\theta$ increases.

\bigskip
\noindent
\textbf{Solution to 4.68.}

Let $(Z_1,Z_2)$ be bivariate normal with correlation $\rho$, and let
\[
U=\Phi(Z_1),\qquad V=\Phi(Z_2),
\]
where $\Phi$ is the standard normal cdf.

\bigskip

\noindent
\textbf{(a) Copula of $(U,V)$.}

By construction,
\[
U = F_{Z_1}(Z_1),\qquad V = F_{Z_2}(Z_2),
\]
where $F_{Z_i}=\Phi$ are the marginal cdfs of $Z_i$.  Thus $U$ and $V$ are standard uniform random variables (probability integral transform).  Their joint cdf is
\[
\begin{aligned}
P(U\le u, V\le v)
&= P\bigl(\Phi(Z_1)\le u,\ \Phi(Z_2)\le v\bigr) \\
&= P\bigl(Z_1\le \Phi^{-1}(u),\ Z_2\le \Phi^{-1}(v)\bigr) \\
&= F_{(Z_1,Z_2)}\bigl(\Phi^{-1}(u),\Phi^{-1}(v)\bigr),
\end{aligned}
\]
where $F_{(Z_1,Z_2)}$ is the bivariate normal cdf with correlation $\rho$.
By definition, the copula $C_\rho$ associated with this bivariate normal distribution is
\[
C_\rho(u,v)
:= F_{(Z_1,Z_2)}\bigl(\Phi^{-1}(u),\Phi^{-1}(v)\bigr).
\]
Comparing the two displays, we see that $(U,V)$ has copula $C_\rho$ (the Gaussian copula with parameter $\rho$).

\bigskip

\noindent
\textbf{(b) Tail dependence for $|\rho|<1$.}

Recall the upper and lower tail dependence coefficients for a copula $C$:
\[
\lambda_U
= \lim_{u\uparrow 1} \frac{1 - 2u + C(u,u)}{1-u},
\qquad
\lambda_L
= \lim_{u\downarrow 0} \frac{C(u,u)}{u}.
\]

For the Gaussian copula $C_\rho$, we have
\[
C_\rho(u,u)
= P(U\le u, V\le u)
= P\bigl(Z_1\le z,\ Z_2\le z\bigr),
\quad\text{where }z=\Phi^{-1}(u).
\]
Equivalently,
\[
P(Z_1>z, Z_2>z)
= 1 + C_\rho(u,u) - 2u.
\]

\emph{Upper tail.}
Write
\[
P(U>u, V>u)
= P(Z_1>z, Z_2>z),
\quad u\to 1,\;z\to\infty.
\]
For a bivariate normal with $|\rho|<1$, it is known (and can be derived via asymptotic analysis of the joint normal tail) that
\[
P(Z_1>z, Z_2>z)
\sim k(\rho)\,\frac{1}{z}\exp\!\left(-\frac{z^2}{1+\rho}\right),
\qquad z\to\infty,
\]
for some positive constant $k(\rho)$ depending on $\rho$.  On the other hand,
\[
P(Z_1>z)
= 1-u
\sim \frac{1}{z}\exp\!\left(-\frac{z^2}{2}\right),
\qquad z\to\infty.
\]
Therefore
\[
\frac{P(Z_1>z, Z_2>z)}{P(Z_1>z)}
\sim k(\rho)\exp\!\left(-z^2\left[\frac{1}{1+\rho} - \frac{1}{2}\right]\right).
\]
If $|\rho|<1$, then $\frac{1}{1+\rho} > \frac{1}{2}$, so the exponent is negative and the ratio decays to $0$ as $z\to\infty$.  Translating back to the copula formulation, this means
\[
\lambda_U
= \lim_{u\uparrow 1} P(U>u\mid V>u)
= 0,\qquad |\rho|<1.
\]

\emph{Lower tail.}
By symmetry of the bivariate normal about $(0,0)$, $(Z_1,Z_2)$ and $(-Z_1,-Z_2)$ have the same joint distribution (with the same correlation $\rho$).  Thus lower–tail events $(Z_1\le -z, Z_2\le -z)$ correspond to upper–tail events for $(-Z_1,-Z_2)$; the same asymptotic analysis applies and yields
\[
\lambda_L = 0,\qquad |\rho|<1.
\]

Heuristically, for any fixed $\rho$ strictly less than $1$ in magnitude, the joint tail of the bivariate normal decays faster than the marginal tail, so the conditional probability of a joint extreme given a marginal extreme goes to zero in both tails.

\bigskip

\noindent
\textbf{(c) Interpretation for joint extremes.}

The fact that $\lambda_L=\lambda_U=0$ for $|\rho|<1$ means that, in a Gaussian copula model, the probability of observing one variable in an extreme tail given that the other is in that tail tends to $0$ as we move further into the tail.  In other words, even if the linear correlation $\rho$ is large (say $\rho=0.9$), the Gaussian dependence structure does not create asymptotic clustering of extremes: very large values of $X$ and $Y$ do \emph{not} tend to occur together with positive limiting probability.

In practical terms, this implies that strongly correlated Gaussian models can substantially understate the probability of joint extreme events.  They may fit well in the center of the distribution (capturing linear correlation), but they impose tail independence: extreme events in one margin are, asymptotically, almost never accompanied by simultaneous extremes in the other margin.  For applications where joint tail behavior is critical (e.g.\ financial or environmental risk), the Gaussian copula may therefore give overly optimistic (too small) estimates of joint extreme risks.

\bigskip
\noindent
\textbf{Solution to 7.67}

An Archimedean copula has the form
\[
C(u,v) = \varphi^{-1}\bigl(\varphi(u)+\varphi(v)\bigr),
\]
where $\varphi$ is a convex, decreasing generator.

\begin{enumerate}
\item[(a)] \emph{Gumbel generator.}

The (bivariate) Gumbel copula with parameter $\theta>0$ is
\[
C_\theta(u,v)
= \exp\!\left(
   -\bigl((-\ln u)^\theta + (-\ln v)^\theta\bigr)^{1/\theta}
  \right), 
  \qquad u,v\in(0,1).
\]

We claim that this can be written in Archimedean form with generator
\[
\varphi(t) = t^{-\theta} - 1,\qquad t\in(0,1].
\]

First compute its inverse:
\[
s = \varphi(t) = t^{-\theta} - 1
\quad\Longrightarrow\quad
t^{-\theta} = s+1
\quad\Longrightarrow\quad
t = (1+s)^{-1/\theta}.
\]
Hence
\[
\varphi^{-1}(s) = (1+s)^{-1/\theta},\qquad s\ge0.
\]

Now form the Archimedean copula associated with this generator:
\[
C(u,v)
= \varphi^{-1}\bigl(\varphi(u)+\varphi(v)\bigr)
= \Bigl(
    1 + (u^{-\theta}-1) + (v^{-\theta}-1)
  \Bigr)^{-1/\theta}
= \bigl(u^{-\theta} + v^{-\theta} - 1\bigr)^{-1/\theta}.
\]

This is exactly the (alternative) Archimedean representation of the
Gumbel copula, so $\varphi(t)=t^{-\theta}-1$ is indeed a valid generator
for the Gumbel family.

\item[(b)] \emph{Kendall's $\tau$ and an estimator for $\theta$.}

For an Archimedean copula with generator $\varphi$, Kendall's $\tau$ is
given by
\[
\tau = 1 + 4\int_0^1 \frac{\varphi(t)}{\varphi'(t)}\,dt.
\]

For $\varphi(t)=t^{-\theta}-1$ we have
\[
\varphi'(t) = -\theta t^{-\theta-1}.
\]
Therefore
\[
\frac{\varphi(t)}{\varphi'(t)}
= \frac{t^{-\theta}-1}{-\theta t^{-\theta-1}}
= -\frac{1}{\theta}\,(t - t^{\theta+1}).
\]

Plugging into the formula for $\tau$,
\[
\tau(\theta)
= 1 + 4\int_0^1 \frac{\varphi(t)}{\varphi'(t)}\,dt
= 1 + 4\int_0^1 \left(-\frac{1}{\theta}(t - t^{\theta+1})\right)\!dt
= 1 - \frac{4}{\theta}\int_0^1 (t - t^{\theta+1})\,dt.
\]

Compute the integral:
\[
\int_0^1 t\,dt = \frac{1}{2},
\qquad
\int_0^1 t^{\theta+1}\,dt = \frac{1}{\theta+2}.
\]
Hence
\[
\int_0^1 (t - t^{\theta+1})\,dt
= \frac{1}{2} - \frac{1}{\theta+2}
= \frac{\theta}{2(\theta+2)}.
\]

Substitute back:
\[
\tau(\theta)
= 1 - \frac{4}{\theta}\cdot\frac{\theta}{2(\theta+2)}
= 1 - \frac{2}{\theta+2}
= \frac{\theta}{\theta+2}.
\]

Thus Kendall's $\tau$ for this Gumbel copula (with generator
$\varphi(t)=t^{-\theta}-1$) is
\[
\boxed{\tau(\theta) = \frac{\theta}{\theta+2}}.
\]

To base an estimator for $\theta$ on Kendall's $\tau$, let $\hat\tau$
denote the sample (empirical) Kendall's $\tau$. We invert the above
relationship:
\[
\hat\tau = \frac{\theta}{\theta+2}
\quad\Longrightarrow\quad
\hat\tau(\theta+2) = \theta
\quad\Longrightarrow\quad
\theta(1 - \hat\tau) = 2\hat\tau
\quad\Longrightarrow\quad
\boxed{\hat\theta = \frac{2\hat\tau}{1-\hat\tau}}.
\]

This $\hat\theta$ is the method-of-moments–type estimator of the Gumbel
parameter based on Kendall's $\tau$.
\end{enumerate}
\end{document}